




\documentclass[sigconf]{aamas} 
\settopmatter{printacmref=false} 
\renewcommand\footnotetextcopyrightpermission[1]{} 
\pagestyle{plain} 


\usepackage{balance} 

\usepackage{booktabs} 
\usepackage{tikz} 
\newcommand*\circled[1]{\tikz[baseline=(char.base)]{
            \node[shape=circle,draw,inner sep=2pt] (char) {#1};}}

\usepackage[linesnumbered,ruled,vlined]{algorithm2e}

\usepackage{xcolor}
\usepackage{booktabs}
\usepackage{enumitem}
\usepackage{url}
\usepackage{multicol}
\usepackage{multirow}
\setcounter{secnumdepth}{3}
\setcounter{tocdepth}{3}
\usepackage{arydshln}
\usepackage{adjustbox}
\usepackage{soul}
\usepackage{bbm}

\renewcommand{\paragraph}[1]{\smallskip\noindent\textbf{#1}}

\usepackage{pifont}
\newcommand{\cmark}{\ding{51}}%
\newcommand{\xmark}{\ding{55}}%

\DeclareMathOperator*{\argmax}{arg\,max}



\setcopyright{ifaamas}
\acmConference[AAMAS '24]{Proc.\@ of the 23rd International Conference
on Autonomous Agents and Multiagent Systems (AAMAS 2024)}{May 6 -- 10, 2024}
{Auckland, New Zealand}{N.~Alechina, V.~Dignum, M.~Dastani, J.S.~Sichman (eds.)}
\copyrightyear{2024}
\acmYear{2024}
\acmDOI{}
\acmPrice{}
\acmISBN{}




\usepackage[absolute,overlay]{textpos}
\settopmatter{printacmref=false}


\title[Analyzing Crowdfunding of Public Projects Under Dynamic Beliefs]{Analyzing Crowdfunding of Public Projects Under Dynamic Beliefs}


\author{Sankarshan Damle}
\affiliation{
  \institution{IIIT, Hyderabad}
  \city{Hyderbad}
  \country{India}}
\email{sankarshan.damle@research.iiit.ac.in}

\author{Sujit Gujar}
\affiliation{
  \institution{IIIT, Hyderabad}
  \city{Hyderbad}
  \country{India}}
\email{sujit.gujar@iiit.ac.in}


\begin{abstract}
In the last decade, social planners have used crowdfunding to raise funds for public projects. As these public projects are non-excludable, the beneficiaries may free-ride. Thus, there is a need to design incentive mechanisms for such strategic agents to contribute to the project. The existing mechanisms, like PPR or PPRx, assume that the agent's beliefs about the project getting funded do not change over time, i.e., their beliefs are static. Researchers highlight that unless appropriately incentivized, the agents defer their contributions in static settings, leading to a ``race'' to contribute at the deadline. In this work, we model the evolution of agents' beliefs as a random walk. We study PPRx -- an existing mechanism for the static belief setting -- in this dynamic belief setting and refer to it as PPRx-DB for readability. We prove that in PPRx-DB, the project is funded at equilibrium. More significantly, we prove that under certain conditions on agent's belief evolution, agents will contribute as soon as they arrive at the mechanism. Thus, we believe that by incorporating dynamic belief evolution in analysis, the social planner can mitigate the concern of race conditions in many mechanisms.
\end{abstract}



\keywords{Civic Crowdfunding, Martingale Theory}

         
\newcommand{\BibTeX}{\rm B\kern-.05em{\sc i\kern-.025em b}\kern-.08em\TeX}

\newtheorem{definition}{Definition}

\newtheorem{theorem}{Theorem}

\newtheorem{observation}{Observation}



\begin{document}


\pagestyle{fancy}
\fancyhead{}



\maketitle 

\begin{textblock}{15}(0.35,1)
\centering
\noindent\small In the Proceedings of the 23\textsuperscript{rd} International Conference on Autonomous Agents and Multiagent Systems (AAMAS), 2024, as an Extended Abstract.
\end{textblock}

\section{Introduction}
The process of raising funds for \emph{public or private projects} through voluntary contributions is known as \emph{crowdfunding}.  As the contributors may be \textit{strategic agents}, researchers analyze crowdfunding game-theoretically \cite{Alaei:2016:DMC:2940716.2940777,strausz2017theory,pp2,ccc_aaai23}. This work focuses on the crowdfunding of public projects such as parks, libraries, and community services. 

\paragraph{Provision Point mechanism for Public projects (PPP)}. \citet{bagnoli1989provision} present the seminal approach for crowdfunding of public projects, which we refer to as PPP. In PPP, a project issuer (PI) sets up the project's crowdfunding by announcing a target threshold, known as the \emph{provision point}. PI seeks voluntary contributions from interested agents towards this project before a known \emph{deadline}. If the net contribution crosses the provision point by the deadline, PI funds the public project through them. If the target is not met, PI returns the contributions. 

\paragraph{PPP \& Free-riding.} As public projects are non-excludable, strategic agents in PPP may choose not to contribute and \emph{free-ride}. Moreover, PPP also admits several inefficient equilibria \cite{bagnoli1989provision,healy2006learning}. The primary challenge in crowdfunding of public projects is thus the lack of \textit{incentives} for strategic agents to contribute. \emph{Provision Point mechanism with Refunds} (PPR)~\cite{zubrickas2014provision} addresses this challenge with the introduction of \emph{refund bonus schemes}. 

\paragraph{PPR.} With PPR, if the project is not funded, the agents receive their contribution and an additional refund proportional to their contribution. Significantly, the incentive structure avoids free-riding by incentivizing the agents to contribute. PPR also overcomes inefficient equilibria as \citet{zubrickas2014provision} proves that the total contribution equals the provision point at equilibrium. Subsequent works \cite{chandra2016crowdfunding,damle2018designing} build on PPR by introducing other refund schemes.

\paragraph{Modes of Crowdfunding.} The following two settings are possible for a project's crowdfunding. (i) \emph{Offline}: in which the participating agents are not aware of the history of the contributions and the net contribution at any epoch. (ii) \emph{Online}: where the net and the history of contributions are visible to each participating agent (e.g., online platforms like \url{kickstarter.com} and \url{spacehive.com}). We refer to crowdfunding over online settings as \emph{sequential crowdfunding}.

Particularly for sequential crowdfunding, \emph{blockchain}-based online platforms are becoming popular. Blockchain is an immutable, decentralized, and public ledger~\cite{nakamoto2019bitcoin}. These ledgers allow for pseudo-anonymous, transparent, and verifiable payments while eliminating the middle person. In practice, 
crowdfunding is now being deployed as \emph{smart contracts} over publicly distributed ledgers such as the \emph{Ethereum blockchain}~(e.g., \url{weifund.io} and \url{starbase.co}). Carrying out transactions through incurs \emph{gas} (a form of payment), and thus, there is a need to design efficient mechanisms for sequential crowdfunding. \citet{damle2018designing} present several refund schemes and show these schemes consume fewer gas units, and therefore, the corresponding crowdfunding mechanisms are \textit{efficient} to deploy as smart contracts over blockchains.

For an offline setting, PPR is an excellent choice. However, PPR induces a simultaneous game~\cite{chandra2016crowdfunding}. In sequential crowdfunding, such a game result in the agents deferring their contribution until the deadline, which in turn may result in the project not getting funded \cite{chandra2016crowdfunding,cason2021early}, i.e., a ``race" condition (RC). \citet{chandra2016crowdfunding} introduce \emph{Provision Point mechanism with Securities} (PPS), which employs a temporal refund scheme to avoid the race condition. Furthermore, \citet{damle2018designing} study various aspects of refund schemes to avoid the race condition and for efficient deployment in blockchain-based online settings.

\paragraph{Information Structure~\cite{damle2019ijcai}.} In addition to the above, crowdfunding of public projects also depends on the information available to the participating agents. To capture this, we define the tuple consisting of each agent's (i) valuation and (ii) belief as its \emph{information structure}. The existing literature majorly assumes that each agent is interested in the funding of the public project, i.e., its \emph{valuation} towards the project's funding is non-negative. Additionally, the literature also assumes that each agent has \emph{symmetric} belief, i.e., each agent believes that the public project will be funded with probability $1/2$ and not with $1/2$. Note that in the real world, the beliefs may be asymmetric. \citet{damle2019ijcai} present PPRx (which leverages PPR) for public projects when information structure allows positive valuation with asymmetric, yet \emph{static}, beliefs.

With this background, we highlight a few observations that need to be addressed in crowdfunding mechanism design.

\begin{observation}[\cite{kickstarter-blog}]\label{obs1}
Empirically, the probability of funding a project \emph{decreases} with an \emph{increase} in its duration.
\end{observation}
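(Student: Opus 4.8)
The first thing to recognize is that Observation~\ref{obs1} is an \emph{empirical} regularity reported from aggregate Kickstarter data~\cite{kickstarter-blog}, not a theorem; its justification is statistical evidence rather than a deduction, so no argument internal to the model is strictly required. Accordingly, I would first state precisely what the claim asserts operationally: holding the provision point and project type fixed, the realized frequency of successful funding is (weakly) decreasing in the announced deadline $T$. At the level of the paper this observation functions as \emph{motivation} --- it is the phenomenon that the dynamic-belief analysis is meant to explain --- so the most honest treatment is simply to cite the source and carry the fact forward.

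If instead the aim is to \emph{rationalize} the observation within the random-walk belief framework the paper adopts, the plan is to recast funding as a first-passage event. Let the martingale belief process $\beta_t \in [0,1]$ that the project is eventually funded evolve as a random walk, and declare the project funded only if $\beta_t$ reaches a success barrier before the deadline $T$ (equivalently, before it is absorbed at an abandonment barrier). The key step is an absorption-probability argument in the spirit of gambler's ruin: write the funding probability as $\Pr[\tau \le T]$ for the success stopping time $\tau$, and show it is monotone in $T$. Since a bounded martingale converges almost surely, any persistent chance of a downward fluctuation accumulates over a longer horizon, giving $\beta_t$ more opportunity to drift into the abandonment region; under a suitable (weakly downward) drift condition on belief evolution --- precisely the kind of condition the paper later imposes --- the success probability before $T$ is then nonincreasing in $T$.

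The main obstacle is securing the \emph{direction} of the effect. For a symmetric random walk a longer horizon would, if anything, \emph{increase} the chance of eventually crossing any fixed upper barrier, so the decrease cannot be obtained without an extra ingredient: either a downward drift in beliefs (induced, say, by free-riding incentives or by discounting of the refund bonus), or a competing absorbing abandonment barrier whose hitting probability grows with $T$. Identifying which mechanism the model actually supplies, and then verifying genuine monotonicity of the resulting absorption probabilities rather than a merely averaged decrease, is where the real work lies. Absent such a mechanism, I would retain Observation~\ref{obs1} as an exogenous empirical input that motivates studying dynamic beliefs, rather than as a consequence derived from them.
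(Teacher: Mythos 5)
Your treatment matches the paper exactly: Observation~\ref{obs1} is an empirical fact cited from \cite{kickstarter-blog}, carries no proof in the paper, and serves purely as motivation for modeling dynamic beliefs. Your primary recommendation --- cite the source and carry the observation forward as exogenous input rather than deriving it --- is precisely what the paper does, and your correctly-flagged caveat that a first-passage rationalization would need an extra drift ingredient explains why the paper (rightly) does not attempt one.
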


\begin{observation}[\cite{cason2021early}]\label{obs2}
Empirically, agents prefer to contribute even in the absence of refunds (refer to Figure~\ref{fig:belief} (left), Section~\ref{sec::pprxdb}).
\end{observation}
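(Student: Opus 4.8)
The plan is to recognize at the outset that Observation~\ref{obs2} is \emph{empirical} rather than deductive: it reports the observed behavior of human subjects in controlled crowdfunding experiments, so the appropriate ``proof'' is the presentation and interpretation of the relevant data rather than a chain of implications. Concretely, I would anchor the claim on the experimental evidence of Cason et al., summarized in Figure~1 (left): the key quantity to exhibit is the distribution of contributions over time in the \emph{no-refund} treatment, establishing that a substantial, front-loaded mass of agents contributes even when no refund-bonus scheme is in place. The first step, therefore, is to isolate the no-refund condition from the larger experimental design and report the aggregate contribution level (and its timing) against the null hypothesis of pure free-riding predicted by the classical provision-point analysis.

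Second, I would establish that the observed contributions are \emph{not} merely an artifact of the refund incentive by comparing the no-refund treatment against the refund treatments within the same protocol, holding valuations, the provision point, and the deadline fixed. The comparison should make clear that, while refunds increase the rate and reliability of funding, a nontrivial fraction of agents contributes regardless, which is precisely what the observation asserts.

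Third---and this is where the observation connects to the paper's theoretical contribution---I would argue that the dynamic-belief model of Section~\ref{sec::pprxdb} \emph{rationalizes} this behavior. Modeling each agent's belief about eventual funding as a random walk, one can compare the expected utility of contributing on arrival against that of deferring. When the belief process is sufficiently optimistic (or its increments satisfy an appropriate drift condition), the expected marginal effect of an agent's own contribution on the funding probability, weighted by its valuation, can exceed the contribution cost, so early contribution becomes individually rational even without a refund bonus. The empirical observation is thus consistent with, and anticipated by, the belief-evolution framework the paper later develops.

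The main obstacle is causal attribution. For the empirical part, the difficulty is ruling out confounds---altruism, warm-glow giving, or experimenter-demand effects---so that early contribution can be credibly linked to agents' \emph{beliefs} rather than to preferences orthogonal to the model; this is exactly why the statement is framed as an empirical observation and deferred to the cited study. For the theoretical rationalization, the hard part will be pinning down the precise condition on the belief process under which no-refund contribution survives as equilibrium behavior, since that argument requires the martingale machinery and utility computations introduced only later in the paper.
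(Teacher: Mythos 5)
Your proposal correctly identifies the nature of the claim---it is an empirical observation whose ``proof'' is the presentation and interpretation of the data in \cite{cason2021early}---and in that sense it matches the paper's treatment, which simply points to Figure~\ref{fig:belief} (left) in Section~\ref{sec::pprxdb}. The evidentiary design you propose, however, differs from the paper's. You suggest a between-treatment comparison: isolate a no-refund treatment and compare it against refund treatments while holding valuations, the provision point, and the deadline fixed. The paper instead makes a within-session, timing-based argument: in the Cason et al.\ experiment the refund bonus is available only for \emph{early} contributions (the refund window ends at $t=60$ seconds, marked by the black vertical line in Figure~\ref{fig:belief} (left)), and the paper plots $x/\theta$ over time for two randomly sampled agents, showing that these agents keep contributing after the window closes. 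Any contribution made at $t>60$ is, by construction, made in the absence of refunds, which is exactly the observation. Your design arguably gives cleaner identification (it controls for within-session time trends and partially addresses the confounds you raise), while the paper's evidence is admittedly anecdotal---two agents---but it exhibits the \emph{same individuals} contributing both inside and outside the refund window, which is precisely what motivates modeling a within-agent belief change. Finally, the direction of your third step is reversed relative to the paper: the paper uses the observation to \emph{motivate} the dynamic-belief model (``possibly implying a change in their beliefs''); it does not derive the observation from the model, and doing so would make the motivation circular. The martingale machinery of Lemmas~\ref{lemma_new_1} and~\ref{lemma_new_2} is downstream of this observation, not part of its support.
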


\paragraph{Belief Evolution.} The above observations point to a change in the agent's belief regarding funding the public project. For instance, with Observation~\ref{obs1}, we see that agents become reluctant to fund projects whose target deadlines are greater. Moreover, from Observation~\ref{obs2}, it is natural to assume that the availability of critical information, such as net contribution and the remaining time, will also impact the agent's belief.

We refer to such evolving beliefs as \emph{dynamic} beliefs. We model this belief evolution as a \emph{random walk}. We argue that each agent's step size, at any epoch, will be a posterior update depending on its prior belief and other auxiliary information (e.g., net contribution or the time remaining).

\paragraph{PPRx-DB.} This work primarily incorporates dynamic beliefs to analyze incentive-based civic crowdfunding mechanisms. To the best of our knowledge, PPRx is the only mechanism that incorporates asymmetric but static beliefs. We study PPRx under dynamic beliefs, and to distinguish our setting, we refer to it as \emph{Provision Point Mechanism for agents With Dynamic Belief} (PPRx-DB). We argue that agents' belief evolution will be a random walk. We identify conditions on the random walk under which we can characterize the sub-game perfect equilibria of the sequential game induced by PPRx-DB. In particular, by utilizing the evolution of each agent's random walk as a martingale, super/sub-martingale, we identify conditions wherein agents are naturally incentivized to contribute as soon as they arrive (i.e., avoid the race condition). Thus, though theoretically sound, complex mechanisms such as PPS may not be warranted in practice (see Table~\ref{tab:RL}).

%
\begin{table*}[!t]
\begin{small}
\centering
    \begin{tabular}{cccccc}
    \toprule
    \multirow{2}{*}{\textbf{Mechanism}}     & \multicolumn{2}{c
    }{\textbf{Information Structure}} & \multirow{2}{*}{\textbf{Setting}} & \multirow{2}{*}{\textbf{On Blockchain}} & \multirow{2}{*}{\textbf{RC}}  \\
    \cline{2-3}
    &  \textbf{Valuation} & \textbf{Belief} &  &  &  \\
    \hline
    PPR \cite{zubrickas2014provision}     &  Positive & Static \& Symmetric & Offline   & - & \textcolor{red}{\cmark}\\
    PPS \cite{chandra2016crowdfunding}     &  Positive & Static \& Symmetric & Online   & Inefficient & \textcolor{black!30!green}{\xmark}\\
    PPRG \cite{damle2018designing}     &  Positive & Static \& Symmetric  & Online   & Efficient & \textcolor{black!30!green}{\xmark} \\
    PPRN \cite{damle2019ijcai} &  Positive \& Negative & Static \& Symmetric & Offline   & - & \textcolor{black!30!green}{\xmark} \\
    PPRx \cite{damle2019ijcai} &  Positive & Static \& \textit{Asymmetric}  & Offline   & - & \textcolor{red}{\cmark}\\
    \cite{pp1} & Binary ($\{0,1\}$) & Static \& Symmetric & Offline & - & \textcolor{red}{\cmark} \\
     \hdashline
   { PPRx-DB}  &   Positive &  {\textit{Dynamic} \& \textit{Asymmetric}}  &{ Online}  & {Efficient} & \textcolor{black!30!green}{\xmark}/\textcolor{red}{\cmark}$^\ddagger$\\
  \bottomrule
  \multicolumn{6}{l}{$\ddagger$: Depends on certain conditions on Agent Beliefs (see Table~\ref{tab:results} for details)}
    \end{tabular}
    \caption{Comparing existing literature with PPRx-DB. Here, ``\xmark'' denotes that the mechanism avoids the Race Condition (RC), and ``\cmark'' denotes otherwise.}
    \label{tab:RL}
\end{small}
\end{table*}
%

\section{Related Work}
Designing mechanisms for crowdfunding of public projects with provision points is an active research area \cite{bagnoli1989provision,marx2000dynamic,morgan2000financing,zubrickas2014provision,chandra2016crowdfunding,chandra2017referral,damle2019ijcai,ccc_aaai23,padala21}. Provision Point mechanism for Public projects (PPP) \cite{bagnoli1989provision} is the first mechanism in this class. In PPP, agents contribute voluntarily to the public project until a deadline. If the project is funded, the agents receive a positive payoff; if not, their contributions are returned. However, PPP suffers from free-riding and has several inefficient equilibria \cite{healy2006learning}. 

To overcome free-riding, \citet{zubrickas2014provision} proposes Provision Point mechanisms with Refunds (PPR). PPR introduces the concept of a \emph{refund bonus scheme} paid out to the contributing agents if the public project is not funded -- along with their contributions. More significantly, the equilibrium contributions in PPR are such that the project is funded at equilibrium \cite{zubrickas2014provision}. Subsequently, researchers propose several mechanisms based on different agent models and refund schemes. We summarize the existing literature for refund mechanisms for crowdfunding of public projects with Table~\ref{tab:RL}; where RC refers to the race condition. PPRx~\cite{damle2019ijcai} is the closest work to ours; however, PPRx assumes static beliefs.

\section{Preliminaries}
We now (i) state our crowdfunding model, (ii) provide relevant game-theoretic definitions, (iii) describe different information structures, (iv) summarize PPRx and (v) define martingales.

    \paragraph{Crowdfunding Model.} The sequential crowdfunding of a public project $P$ set up by a \emph{Project Issuer} (PI) involves the following steps.
    \begin{enumerate}[leftmargin=*, noitemsep]
        \item PI announces a public project $P$, seeking voluntary contributions towards it. The announcement also comprises of the \emph{target threshold} ($H_0$), the \emph{project deadline} ($T$), and the \emph{refund bonus scheme} ($R$) (if deployed). Let, $\mathbf{T}=\{1,\dots,T\}$ denote the set of discrete epochs. 
        \item After step (1), each interested agent $j$ arrives to the crowdfunding platform at $a_j\in \mathbf{T}$. Upon arrival, each agent $j$ observes its valuation ($\theta_j$) towards $P$. 
        \item An interested agent $i$ may contribute $x_i$ to the crowdfunding platform at an epoch of time $t_i\in \mathbf{T}$. Let $\mathbf{A}=\{1,\dots,n\}$ denote the set of all contributing agents, with $C_0=\sum_{i\in\mathbf{A}} x_i$ as the total contribution towards the public project; and $\vartheta=\sum_{i\in\mathbf{A}} \theta_i$ as the total valuation of all the contributing agents. Observe that, as we consider sequential crowdfunding and as the agents are strategic, their contributions may depend on $\theta_i$ \emph{as well as} the total contribution up till $t_i$, i.e., $C_{t_i}$.
        \item PI states that $P$ is funded only if the total contribution equals $H_0$ at any point before or at the deadline, i.e., $P$ is funded only if $\exists t\in \mathbf{T} \mbox{~s.t.~}C_{t}\geq H_0$. Otherwise, $P$ is not funded.
        \item In the event that $P$ is not funded, PI returns the contributions of all the contributing agents. Agents may also be eligible for additional refunds depending on the mechanism deployed.
    \end{enumerate}

\smallskip\noindent\textit{Assumptions.}
       As standard in crowdfunding literature, our work assumes that,
            \begin{itemize}[leftmargin=*, noitemsep]
                \item We assume that $\vartheta$ is public knowledge s.t. $\vartheta>H_0$~\cite{zubrickas2014provision,chandra2016crowdfunding,damle2019ijcai}. It implies that there is sufficient interest in the funding of the public project.
                \item Agents arrive at the crowdfunding platform sequentially over time and \emph{not} simultaneously and can contribute only once to the project~\cite{chandra2016crowdfunding,damle2018designing}. 
                \item Agents do not have \emph{any} information regarding the funding of the project other than knowing the total contribution at any epoch of time~\cite{chandra2016crowdfunding,damle2018designing}. Moreover, agents are also not aware of the number of (i) contributions made before their arrival;  and (ii) agents that are yet to arrive at any epoch of time.
            \end{itemize}

Steps 1-5 induce a game among the interested agents. To analyze this induced game, we next define Sub-game Perfect Equilibrium. Towards these, let $\psi_i=(x_i,t_i)$ denote the strategy of each agent $i\in \mathbf{A}$ with $\psi=(\psi_1,\dots,\psi_n)$ as the vector of their strategies. As standard in the game theory literature, we use the subscript $-i$ to denote vectors without agent $i$. Further, let the payoff derived by an agent $i$ with valuation $\theta_i$ and with the strategy profile, $\psi$ be $\pi_i(\theta_i;\psi)$.

\paragraph{Sub-game Perfect Equilibrium (SPE).}
In a sequential setting, the agents may observe the actions of other agents over time. Thus, we must characterize the strategy profile such that it is the \emph{best response} for every agent to follow it at any time during the project, i.e., for every sub-game induced. We refer to such a strategy profile as \emph{Sub-game Perfect Equilibrium} defined as follows,

      \begin{definition}[Sub-game Perfect Equilibrium (SPE)] A strategy profile $\psi^\star = (\psi_1^\star,\dots,\psi_n^\star)$ is said to be a sub-game perfect equilibrium if for every agent $i$, if it maximizes the payoff $\pi_i(\psi^\star_i,\psi^\star_{-i|h^{a_i}};\theta_i),$ i.e. $\forall i \in \mathbf{A}$,
            $$\pi_i(\psi^\star_i,\psi^\star_{-i|h^{a_i}};\theta_i) \geq \pi_i(\psi_i,\psi^\star_{-i|h^{a_i}};\theta_i) \ \forall\psi_i, \forall h^{t}, \forall \theta_i.$$
            \end{definition}

 Here, $h^t$ is the history of the game till time $t$, constituting the net contribution and $\psi^\star_{-i|h^{a_i}}$ indicates that the agents who arrive after $a_i$ follow the strategy specified by $\psi^\star_{-i}$. Informally it means that, at every stage of the game, irrespective of what has happened, it is \emph{Nash Equilibrium} to follow the SPE strategy. That is, dropping the dependence on the history, we note that $\psi^\star = (\psi_1^\star,\dots,\psi_n^\star)$ satisfies Pure-strategy Nash equilibrium when $\forall i \in \mathbf{A}$,
            $$\pi_i(\psi^\star_i,\psi^\star_{-i};\theta_i) \geq \pi_i(\psi_i,\psi^\star_{-i};\theta_i) \ \forall\psi_i, \forall \theta_i.$$


\paragraph{Agent Information Structure.}  Similar to \cite{damle2019ijcai}, we consider agents with positive valuation and  asymmetric beliefs towards the funding of the public project. However, unlike \cite{damle2019ijcai}, we \emph{also} consider that an agent's belief may dynamically evolve over time. To the best of our knowledge, our work is the first to analyze such an information structure. 

    \paragraph{PPRx.} To incorporate asymmetric beliefs, \citet{damle2019ijcai} introduce PPRx. PPRx consists of two phases, (i) \emph{Belief Phase} (BP) with budget $B_B$ and deadline $T_B$ wherein each agent $i$ arrives at $a_{i,1}$ to the crowdfunding platform and submits its (static) belief $b_i$ at $t_{i,1}$ towards the funding of the project, based on which it gets a reward $m_i$ (known as Belief Based Reward); followed by (ii) \emph{Contribution Phase} (CP) with budget $B_C$ and deadline $T_C$ $(T=T_C+T_B)$ wherein each agent $i$ arrives at $a_{i,2}$ to the crowdfunding platform and contributes $x_i$ at $t_{i,2}$ to the project. Thus, $\psi_i=(b_i,t_{i,1},x_i,t_{i,2})$ is each agent $i$'s strategy in PPRx. 

We create the following subsets for agents with ``high" and ``low" belief: $\mathbf{A}_H=\{i|\forall i \in \mathbf{A} \mbox{~s.t.~}b_i\geq 1/2\}$ and $\mathbf{A}_L=\{i|\forall i \in \mathbf{A} \mbox{~s.t.~}b_i<1/2\}$. Trivially, $\mathbf{A}=\mathbf{A}_H\sqcup \mathbf{A}_L$. Let $\mathbbm{1}_X$ be an indicator random variable s.t. $\mathbbm{1}_X=1$ if $X$ is true and $\mathbbm{1}_X=0$ otherwise. Then, the payoff structure in PPRx $\forall i \in \mathbf{A}_H,$
    \begin{equation}
	\label{eqn1::PPRx}
    \pi_i = \mathbbm{1}_{{C_0}\geq H_0}\cdot (\theta_i - x_i + m_i) + \mathbbm{1}_{{C_0}< H_0}\cdot \left(\frac{x_i}{C_0}\cdot B_C\right)
	\end{equation}
Likewise, $\forall i \in \mathbf{A}_L$,
	 \begin{equation}
	\label{eqn2::PPRx}
        \pi_i= \mathbbm{1}_{{C_0}\geq H_0}\cdot ( \theta_i - x_i) + \mathbbm{1}_{{C_0}< H_0}\cdot \left(\frac{x_i}{C_0}\cdot B_C + m_i\right)
	\end{equation}

\smallskip
\noindent\textit{\underline{Funded and Unfunded Payoffs}.} We denote the payoff for agent $i$ when $C_0\geq {H}_0$ as $\pi_i^F$ (i.e., funded) and when $C_0< {H}_0$ as $\pi_i^{UF}$ (i.e., unfunded). In PPRx,
\begin{itemize}[leftmargin=*,noitemsep]
    \item $\forall i \in \mathbf{A}_H$: 
    \begin{align*}
            \mathbb{E}[\pi_i^F(x_i,\cdot)]&=b_i\cdot\left(\theta_i-x_i+m_i\right)\\ \mathbb{E}[\pi_i^{UF}(x_i,\cdot)]&=(1-b_i)\cdot\left(\frac{x_i}{C_0}\cdot B_C\right)
    \end{align*}
    \item $\forall i \in \mathbf{A}_L$:
    \begin{align*}
            \mathbb{E}[\pi_i^F(x_i,\cdot)]&=b_i\cdot\left(\theta_i-x_i\right)\\ \mathbb{E}[\pi_i^{UF}(x_i,\cdot)]&=(1-b_i)\cdot\left(\frac{x_i}{C_0}\cdot B_C+m_i\right)
    \end{align*}
\end{itemize}

\smallskip
\noindent\textit{\underline{Position of BBR}.}
We now describe the intuition behind the different positions of $m_i$ in Eq.~\ref{eqn1::PPRx} and Eq.~\ref{eqn2::PPRx}. We first note that, in PPRx, only one set of agents, either $\mathbf{A}_H$ or $\mathbf{A}_L$, get the BBR reward. The agents in $\mathbf{A}_H$ get the reward when the project is funded, implying their belief regarding the project's funding was `correct.' Likewise, agents in $\mathbf{A}_L$ get the reward when the project is not funded in accordance with their lower belief. The impact of such a utility structure is that agents with a higher belief have a greater equilibrium contribution than agents with lower belief -- which is a desirable outcome.

\paragraph{Belief Based Reward (BBR).} Based on the belief submitted by each agent in the Belief Phase, PPRx gives each agent $i$ a reward $m_i$, as follows \cite{damle2019ijcai},
    
    \begin{equation}
     \label{eqn::BBR}
 	  m_i = 
 	  \begin{cases}
 	    \frac{w_i}{\sum_{j\in\mathbf{A}_H} w_j} \times B_B   & \forall i \in \mathbf{A}_H    \\
        \frac{w_i}{\sum_{j\in \mathbf{A}_L} w_j} \times B_B   & \forall i \in \mathbf{A}_L \\
	  \end{cases}
	\end{equation}
for $w_i = \frac{y_i}{\sum_{j} y_j} \ \forall j \in S_{t_i}$ where  $y_i$ is the score calculated by the RBTS mechanism \cite{witkowski2012robust} depending on the belief $b_i$, while $S_{t_i}$ is the set consisting of all the agents that have reported their belief till $t_i$. BBR is (i) \emph{incentive compatible} and (ii) is a decreasing function of time. However, PPRx does not incorporate the evolution of agents' beliefs over time and its dependence on the total contribution, i.e., does not incorporate dynamic beliefs. 

\paragraph{Martingale Theory.}
A \textit{martingale} is a sequence of random variables such that the next value is equal to the current value in expectation,  conditioned over all prior values. However, for several applications, one cannot always guarantee this equality. To analyze such scenarios, we interest ourselves in \emph{bounding} the expected values. Such a sequence corresponds to a \emph{super-martingale} or a \emph{sub-martingale}. Formally, consider a discrete sequence of random variables $X_0,X_1,\dots$ evolving over time. Such a collection of random variables is referred to as a \emph{stochastic process}, denoted by $\{X_t\}_{t\in\mathbf{T}}$. 

     \begin{definition}[Martingales~\cite{martingalepdf}]
    A stochastic process $\{X_t\}_{t\in\mathbf{T}}$ such that $\mathbb{E}[X_t]<\infty$, is a
    \begin{itemize}
        \item Martingale if $\mathbb{E}[X_{t+1}|X_0,\dots,X_t] = X_t$
        \item Sub-martingale if $\mathbb{E}[X_{t+1}|X_0,\dots,X_t] \geq X_t$; and
        \item Super-martingale if $\mathbb{E}[X_{t+1}|X_0,\dots,X_t] \leq X_t$.
    \end{itemize}
    \end{definition}

In mechanism design literature, martingale theory is popularly used to model the dynamic evolution of an agent's private information. For e.g., \citet{chawla16} model agent's dynamic valuation for a product (such as Netflix subscription) over time as a Martingale. \citet{balseiro2018dynamic} model agent's expected utility as a Martingale to design a dynamic auction.

\begin{figure}
    \centering
    \includegraphics[width=\columnwidth]{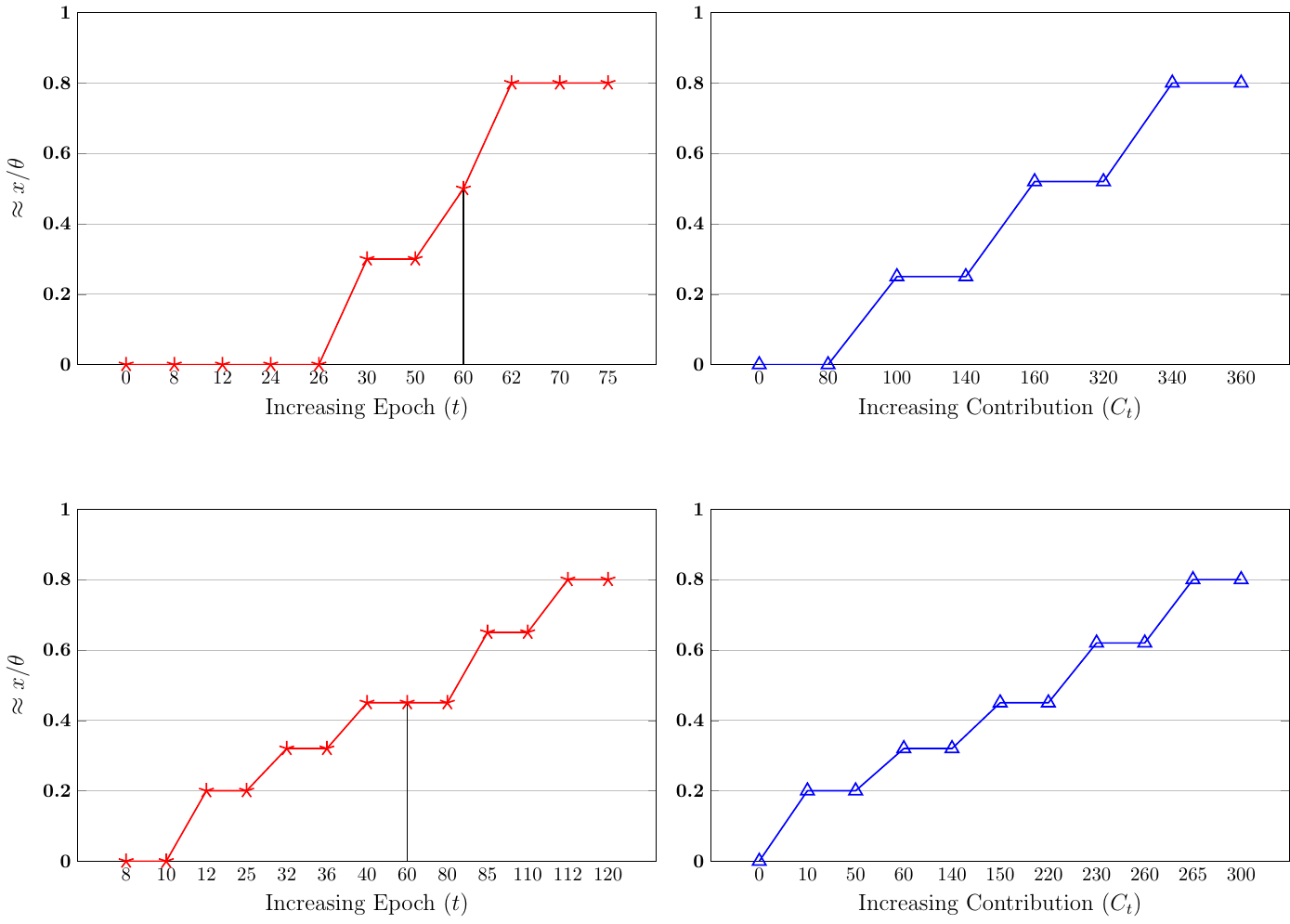}
    \caption{Plotting $\approx x/\theta$ for two randomly sampled agents using the dataset available with \cite{cason2021early}. The black vertical line in the left plots represents the end of the refund period. We observe that the agents contribute even post the refund stage, possibly implying a change in their beliefs. }
    \label{fig:belief}
\end{figure}

\section{PPRx-DB Mechanism\label{sec::pprxdb}}

This section analyzes PPRx with dynamic beliefs. We first present the agent's dynamic belief model. We then introduce PPRx-DB and provide agents' equilibrium contribution and the equilibrium time of contribution. We begin by presenting empirical evidence that agents' beliefs evolve during the crowdfunding process.

\paragraph{Observing Agent's Belief Evolution.} \citet{cason2021early} conduct real-world experiments to primarily test the impact of early refund bonus on a crowdfunding project's success. We use their data to provide the following insight regarding an agent's evolving belief.

 Figure~\ref{fig:belief} plots $x/\theta$, with varying $t$ and $C_t$, for two random agents from the dataset. Post $t>60$ seconds; the agents do not get refunds for their contributions. Yet, we observe that agents contribute post $t>60$ (Figure~\ref{fig:belief}(left)). Agents' contribution pattern also evolves with $C_t$ (Figure~\ref{fig:belief}(right)).

\subsection{Agent Dynamic Belief Model} 

We model the evolution of each agent's belief as a stochastic process over discrete epochs. Observe that this belief evolution may depend on available information at an epoch (e.g., net contribution). After each epoch, as an agent's belief can increase or decrease, we model it as a random walk.

For each agent $i\in\mathbf{A}$, let $\{b_{i,t}\}_{t\in\mathbf{T}}$ denote the random walk with $X_{i,t}$ as the random variable for the step size at an epoch $t$. More formally, let each agent $i$'s \textit{prior} belief regarding the project's funding be $b_{i,0}\in [0,1]$. At each epoch $t$, the agent's belief evolves in accordance with the available information, e.g., $C_{t}$ or remaining epochs $T-t$. 

Now, at each epoch $t\geq 1$, we denote agent $i$'s \textit{posterior} belief regarding the project's funding as: $b_{i,t}=b_{i,t-1}+X_{i,t}$. The sizes of the positive ($s_{i,+}()$) and negative ($s_{i,-}()$)steps (with ``$\circ$'' as auxiliary information) are:
\begin{equation*}
        X_{i,t}=
\begin{cases}
    s_{i,+}(C_t,T-t, \circ),& \text{with probability } p\in [0,1]\\
    s_{i,-}(C_t,T-t, \circ),  & \text{with probability } 1-p
\end{cases}
\end{equation*}

\noindent\textbf{Note 1.} $X_{i,t}$ captures the agent's belief evolution through the size of the step sizes, dependent on the available information. That is, an agent $i$'s belief evolves as $b_{i,t}\leftarrow b_{i,t-1} + X_{i,t}$ where $b_{i,0}$ is the agent's prior belief\footnote{Future work can explore the exact characterization of this random walk.}. We now have a model for the random walk, $\{b_{i,t}\}_{t\in\mathbf{T}}$. Our goal is to analytically derive equilibrium strategies for the agents conditioned on the behavior of $\{b_{i,t}\}_{t\in\mathbf{T}}$.


\subsection{PPRx-DB: Formal Description}
We refer to PPRx in such a setting as \emph{Provision Point Mechanisms for agents with Dynamic Belief} (PPRx-DB) to enhance readability. Protocol \ref{Pro:PPRx-DB} formally describes PPRx-DB.

We denote $\mathbf{T}_B=\{0,1,\dots,T_B\}$ and $\mathbf{T}_C=\{T_{B+1},\dots,T_B+T_C\}$, where $T=T_B+T_C$, as the sets constituting the discrete epochs of time for the two phases. For the analysis of the mechanism, let $\bar{\mathbf{T}}_C=\{1,\dots,T_C\}$ as the set of epochs of time for the CP (w.l.o.g). Consequently, the strategy for each agent $i\in \mathbf{A}$ becomes $\psi_i=(\hat{b}_i,t_{i,1},x_i,t_{i,2})$. The payoff structure of PPRx-DB is the same as in PPRx, i.e., Eqs. \ref{eqn1::PPRx} and \ref{eqn2::PPRx}. We next game-theoretically analyze PPRx-DB.

     \begin{algorithm}[!t]
	\DontPrintSemicolon\small
    \renewcommand{\algorithmcfname}{Protocol}
    \begin{itemize}[leftmargin=*]
    \item \textbf{Belief Phase (BP):}
  			\begin{enumerate}
            \item PI announces the start of the phase
            \item Agent $i$ enters at $a_{i,1}$ and submits $\hat{b}_i$ at $t_{i,1}\in \mathbb{T}_B$
            \item Each participating agent $i$ is told their BBR reward $m_i$ 
            \end{enumerate}
    \item \textbf{Contribution Phase (CP):}
       		\begin{enumerate}
            \item PI announces the start of the phase
            \item Agent $i$ enters at $a_{i,2}$ and submits $x_i$ at $t_{i,2}$
            \item The protocol continues until the target is reached
            \item PI announces end of phase if target is not reached at $T_C$
       		\end{enumerate}
      \item Refunds are distributed as per the outcome by PI according to Eqs. \ref{eqn1::PPRx} and \ref{eqn2::PPRx} for set of agents $\mathbf{A}_H$ and $\mathbf{A}_L$, respectively    \end{itemize}

    \caption{\label{Pro:PPRx-DB}\emph{PPRx-DB Mechanism}}
	\end{algorithm}
        \normalsize

 		   
\subsection{PPRx-DB: Theoretical Analysis}
In this subsection, we first discuss the funding of the public project at equilibrium. Second, we provide the upper bound of the agents' equilibrium contribution. Last, we present the equilibrium time of contribution for agents based on the underlying condition of the agent's belief evolution. 

\subsubsection{Project Status at Equilibrium}
 In PPRx-DB, the public project is funded at equilibrium. That is, at equilibrium, the total contribution \emph{equals} the provision point, i.e., $C_0=H_0$, when $\vartheta>H_0$. Consider the following lemma.
 
        \begin{lemma}\label{DB::eqb}
        In PPRx-DB, the public project is funded at equilibrium, i.e., $C_0=H_0$ if $\vartheta>H_0$ with $B_B,B_C>0$.
        \end{lemma}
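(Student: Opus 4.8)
The plan is to prove the claim by contradiction on the realized aggregate contribution $C_0=\sum_{i\in\mathbf{A}}x_i$ induced by the equilibrium profile, splitting into the over-funded case $C_0>H_0$ and the under-funded case $C_0<H_0$ and exhibiting a profitable unilateral deviation in each. This parallels the funded-at-equilibrium results for PPR and PPRx; since PPRx-DB inherits the PPRx payoff structure (Eqs.~\ref{eqn1::PPRx}--\ref{eqn2::PPRx}), the argument is insensitive to whether beliefs are static or dynamic. The only role the dynamic belief plays is through the value $b_{i,t_{i,2}}$ that each agent holds at its contribution epoch; I will evaluate every candidate deviation through the realized funding indicator $\mathbbm{1}_{C_0\geq H_0}$, so the step sizes $X_{i,t}$ of the random walk never enter the aggregate bookkeeping.

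For the over-funded case, note that $C_0>H_0>0$ forces some agent $i$ to have $x_i>0$. With the project funded, $i$'s realized payoff is $\theta_i-x_i+m_i$ (if $i\in\mathbf{A}_H$) or $\theta_i-x_i$ (if $i\in\mathbf{A}_L$), and crucially $m_i$ is fixed during the Belief Phase and does not depend on $x_i$. Reducing $x_i$ by any $\epsilon\in(0,\min\{x_i,\,C_0-H_0\})$ keeps the aggregate above $H_0$, hence the project funded, while strictly increasing $i$'s payoff by $\epsilon$. This contradicts equilibrium and yields $C_0\leq H_0$.

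For the under-funded case, the project is unfunded, so by Step~5 of the model all contributions are returned and each agent's net payoff collapses to the refund-bonus term $\frac{x_i}{C_0}B_C$ (plus $m_i$ for $i\in\mathbf{A}_L$). Since the contribution itself is refunded, raising $x_i$ while the project remains unfunded is costless yet strictly increases the bonus share: writing $C_{-i}=C_0-x_i$, the derivative $\partial_{x_i}\big(\frac{x_i}{C_{-i}+x_i}B_C\big)=\frac{C_{-i}}{C_0^{2}}B_C$ is positive whenever $B_C>0$. Hence no profile with $C_0<H_0$ is an equilibrium --- every agent strictly prefers to contribute more --- and contributions rise at least until $C_0=H_0$. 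Here $B_C>0$ is exactly what makes the bonus a live incentive, and $B_B>0$ keeps each $m_i$ well-defined. Together with $C_0\leq H_0$ this gives $C_0=H_0$; the hypothesis $\vartheta>H_0$ then guarantees that the exactly-funded profile is individually rational for the contributors, so that at $C_0=H_0$ no high-value agent wishes to drop back below the threshold (which would replace $\theta_i-x_i$ by the strictly smaller bonus) nor to push strictly above it (the over-funded case).

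The main obstacle is reconciling the belief-weighted expected payoffs of Eqs.~\ref{eqn1::PPRx}--\ref{eqn2::PPRx} with the deterministic switch at $H_0$: I must argue that, at the margin, an agent's own contribution resolves the funding event, so the relevant incentive in the unfunded region is the cost-free bonus term rather than the belief-discounted funded term, and symmetrically that the boundary profile $C_0=H_0$ is stable against both upward and downward single-agent moves. I would discharge this by evaluating deviations through $\mathbbm{1}_{C_0\geq H_0}$ (an agent who unilaterally reaches $H_0$ funds the project with certainty) and invoking $\vartheta>H_0$ to secure individual rationality of the exactly-funded allocation.
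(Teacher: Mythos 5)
Your proof is correct and structurally mirrors the paper's: both rule out $C_0>H_0$ and $C_0<H_0$ by exhibiting a profitable unilateral marginal deviation, forcing $C_0=H_0$ at any equilibrium. The genuine difference is where the hypothesis $\vartheta>H_0$ does its work. The paper invokes it in the \emph{underfunded} branch: since $\sum_i x_i=C_0<H_0<\vartheta=\sum_i\theta_i$, there exists a slack agent with $x_i<\theta_i+m_i$ (if in $\mathbf{A}_H$) or $x_i<\theta_i$ (if in $\mathbf{A}_L$), and it is this agent who marginally raises its contribution to capture a larger share of $B_C$. You instead claim the upward deviation is available to \emph{every} agent because a refunded contribution is ``costless,'' and you relocate $\vartheta>H_0$ to the individual rationality of the exactly-funded profile. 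Your version is sound under the realized-indicator evaluation you adopt (your restriction $\epsilon<H_0-C_0$ keeps the project unfunded, so the bonus derivative $B_C C_{-i}/C_0^{2}$ is the whole story --- though note it requires $C_{-i}>0$, i.e., the sole-contributor edge case needs a separate line). However, this is exactly the point you flag as the ``main obstacle'': under the belief-weighted payoffs of Eqs.~\ref{eqn1::PPRx}--\ref{eqn2::PPRx}, raising $x_i$ is \emph{not} costless, since it lowers the funded payoff $\theta_i-x_i+m_i$ one-for-one with weight $b_{i,t}$. The paper's slack-agent device is precisely what keeps the deviating agent's funded payoff positive, so the deviation cannot end in a loss however the funding event resolves; that is what $\vartheta>H_0$ buys the paper, and what your proof trades away in exchange for making the hypothesis's role in sustaining the funded boundary profile explicit. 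Both routes establish the lemma as stated; the paper's is more robust to the expected-payoff reading used throughout its equilibrium analysis, while yours isolates more cleanly that under-provision fails for purely bonus-driven reasons.
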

        \begin{proof}
             From Eq.~\ref{eqn1::PPRx} and Eq.~\ref{eqn2::PPRx}, at equilibrium, $C_0<H_0$ cannot hold, since $\exists i\in\mathbf{A}_H$ with $x_i<\theta_i+m_i$ or $\exists i\in {\mathbf{A}}_L$ with $x_i<\theta_i$, at least, as $\vartheta>H_0$. Such an agent $i$ can obtain a greater refund bonus by marginally increasing its contribution $x_i$ as $B_C>0$. Moreover, if $C_0>H_0$, any contributing agent can increase its payoff by marginally decreasing its contribution. Thus, at equilibrium $C^0=H^0$ holds, i.e., the project is funded at equilibrium.       
        \end{proof}

\subsubsection{Equilibrium Contribution: Upper Bound} 

We now analyze the equilibrium contribution of each agent $i\in\mathbf{A}$ in PPRx-DB. As each agent $i\in\mathbf{A}$ submits its belief $\hat{b}_i$ in the BP, PI can categorize each agent $i$ to the sets $\mathbf{A}_H$ or $\mathbf{A}_L$. We next independently compute equilibrium contributions for the agents in $\mathbf{A}_H$ and $\mathbf{A}_L$, respectively.

\circled{1}~\paragraph{For Agents with High Belief.}
 Lemma \ref{lemma1-contri} presents the equilibrium contribution, $x_{i}^\star$, analysis for each agent $i\in\mathbf{A}_H$. For the proof, we solve for $x_{i}^\star$ such that the (expected) funded payoff is greater than equal to the (expected) unfunded payoff. This is because from Lemma~\ref{DB::eqb} we know that at equilibrium the contributions are such that $C_0=H_0$.
        \begin{lemma} \label{lemma1-contri}
        In PPRx-DB, for each $i\in\mathbf{A}_H$, its equilibrium contribution is 
        \begin{equation}\label{eqb-c-1}
        x_i^\star\leq\frac{{H}_0b_{i,{t_{i,{2^\star}}}}(\theta_i+m_i)}{B_C(1-b_{i,{t_{i,{2^\star}}}})+{H}_0b_{i,{t_{i,{2^\star}}}}},
        \end{equation} where $t_{i,{2^\star}}\in\bar{\mathbf{T}}_C$ is its time of contribution at equilibrium.
        \end{lemma}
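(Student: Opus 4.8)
The plan is to reduce the equilibrium characterization to a single no-deviation inequality and then solve that inequality for $x_i$. First I would invoke Lemma~\ref{DB::eqb} to fix the project status at equilibrium: since $\vartheta > H_0$ and $B_B, B_C > 0$, the project is funded and $C_0 = H_0$. This is what lets me substitute $C_0 = H_0$ into the unfunded payoff branch $\frac{x_i}{C_0}\cdot B_C$ of Eq.~\ref{eqn1::PPRx}, removing the otherwise-awkward dependence of the refund term on the endogenous total contribution.

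Next I would adapt the expected funded/unfunded payoffs of PPRx to the dynamic setting by replacing the static belief $b_i$ with the posterior belief $b_{i,t_{i,2^\star}}$ realized at the agent's equilibrium contribution time $t_{i,2^\star}\in\bar{\mathbf{T}}_C$. For $i\in\mathbf{A}_H$ this gives $\mathbb{E}[\pi_i^F(x_i,\cdot)] = b_{i,t_{i,2^\star}}(\theta_i - x_i + m_i)$ and $\mathbb{E}[\pi_i^{UF}(x_i,\cdot)] = (1 - b_{i,t_{i,2^\star}})\cdot \frac{x_i}{H_0}B_C$. The key game-theoretic step is then the willingness-to-fund (no-deviation) constraint: at equilibrium agent $i$ must weakly prefer the funded outcome to the unfunded one, i.e. $\mathbb{E}[\pi_i^F(x_i^\star)] \geq \mathbb{E}[\pi_i^{UF}(x_i^\star)]$. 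The intuition is that if agent $i$ contributed strictly more than the amount at which these two quantities coincide, its unfunded (refund-bonus) payoff would strictly dominate, so it could profitably deviate by marginally lowering $x_i$ to tip the project into the unfunded state; hence the equilibrium contribution is bounded above by the value saturating this inequality.

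Finally I would carry out the (one-line) rearrangement. Starting from $b_{i,t_{i,2^\star}}(\theta_i - x_i + m_i) \geq (1 - b_{i,t_{i,2^\star}})\frac{x_i B_C}{H_0}$, I collect the $x_i$ terms to get $b_{i,t_{i,2^\star}}(\theta_i + m_i) \geq x_i\big(b_{i,t_{i,2^\star}} + (1 - b_{i,t_{i,2^\star}})B_C/H_0\big)$, and dividing and clearing $H_0$ yields the stated bound $x_i^\star \leq \frac{H_0 b_{i,t_{i,2^\star}}(\theta_i + m_i)}{B_C(1 - b_{i,t_{i,2^\star}}) + H_0 b_{i,t_{i,2^\star}}}$.

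I expect the main obstacle to be conceptual rather than algebraic, since the closing step is routine. The delicate points are (i) justifying that the relevant belief to substitute is the posterior $b_{i,t_{i,2^\star}}$ at the realized contribution epoch rather than the prior $b_{i,0}$ or the reported $\hat{b}_i$, which requires appealing to subgame perfection so that the agent best-responds using its current information at the moment it actually contributes in the CP; and (ii) arguing rigorously that the funded-versus-unfunded comparison is the \emph{binding} constraint pinning down the upper bound, which relies on the monotonicity of each payoff branch in $x_i$ (the funded payoff is decreasing and the unfunded payoff increasing in $x_i$) together with the deviation argument above.
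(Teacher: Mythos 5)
Your proposal is correct and follows essentially the same route as the paper's proof: invoking Lemma~\ref{DB::eqb} to set $C_0=H_0$, imposing $\mathbb{E}[\pi_i^F]\geq \mathbb{E}[\pi_i^{UF}]$ with the posterior belief $b_{i,t_{i,2^\star}}$, and rearranging to obtain Eq.~\ref{eqb-c-1}. The additional justifications you give (the marginal-deviation argument and the monotonicity of the two payoff branches in $x_i$) are sound elaborations of what the paper states more tersely.
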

        \begin{proof}
        Since at equilibrium  $C_0=H_0$, each agent $i$ will contribute such that its funded payoff is no less than the highest possible unfunded payoff.
        That is, we solve for $x_{i}^\star$ such that $\mathbb{E}[\pi_i^F]\geq \mathbb{E}[\pi_i^{UF}]$, for each $i\in\mathbf{A}_H$. That is,
        \begin{align*}
            b_{i,t_{i,2^\star}}\cdot(\theta_i-x_i^\star+m_i)\geq (1-b_{i,t_{i,2^\star}})\cdot \frac{x_i^\star}{H_0}\cdot B_C \\
             \implies x_i^\star\leq\frac{{H}_0b_{i,{t_{i,{2^\star}}}}(\theta_i+m_i)}{B_C(1-b_{i,{t_{i,{2^\star}}}})+{H}_0b_{i,{t_{i,{2^\star}}}}}
        \end{align*}
        This proves the lemma.
        \end{proof}

\circled{2}~\paragraph{For Agents with Low Belief.} Similar to our analysis for the set of agents in $\mathbf{A}_H$, Lemma \ref{lemma2-contri} presents the equilibrium contribution analysis of each agent $i\in\mathbf{A}_L$.

        \begin{lemma}\label{lemma2-contri}
        In PPRx-DB, for each $i\in\mathbf{A}_L$, its equilibrium contribution is 
        \begin{equation}\label{eqb-c-2}
        x_i^\star\leq \frac{{H}_0b_{i,{t_{i,{2^\star}}}}\theta_i+{H}_0m_i(1-b_{i,{t_{i,{2^\star}}}})}{B_C(1-b_{i,{t_{i,{2^\star}}}})+{H}_0b_{i,{t_{i,{2^\star}}}}},
        \end{equation} where $t_{i,{2^\star}}\in\bar{\mathbf{T}}_C$ is its time of contribution at equilibrium.
        \end{lemma}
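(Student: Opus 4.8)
The plan is to mirror the argument used for Lemma~\ref{lemma1-contri}, substituting the low-belief payoff expressions of Eq.~\ref{eqn2::PPRx} in place of the high-belief ones. First I would invoke Lemma~\ref{DB::eqb} to fix $C_0 = H_0$ at equilibrium; this both justifies restricting attention to the funded/unfunded dichotomy and lets me replace the $C_0$ appearing in the denominator of the unfunded payoff by $H_0$. As in the high-belief case, the equilibrium contribution $x_i^\star$ of an agent $i \in \mathbf{A}_L$ is characterized as the largest contribution for which that agent still (weakly) prefers the funded outcome to its best unfunded outcome, with all beliefs evaluated at the dynamic value $b_{i,t_{i,2^\star}}$ realized at the equilibrium contribution time $t_{i,2^\star} \in \bar{\mathbf{T}}_C$.

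Concretely, I would write the boundary condition $\mathbb{E}[\pi_i^F] \geq \mathbb{E}[\pi_i^{UF}]$ using $\mathbb{E}[\pi_i^F] = b_{i,t_{i,2^\star}}(\theta_i - x_i^\star)$ and $\mathbb{E}[\pi_i^{UF}] = (1-b_{i,t_{i,2^\star}})\left(\frac{x_i^\star}{H_0}B_C + m_i\right)$. Collecting the terms linear in $x_i^\star$ on one side and the $\theta_i$ and $m_i$ terms on the other, factoring out $x_i^\star$, and dividing by the coefficient $\frac{H_0 b_{i,t_{i,2^\star}} + (1-b_{i,t_{i,2^\star}})B_C}{H_0}$ — which is strictly positive since $b_{i,t_{i,2^\star}} \in (0,1)$ and $B_C, H_0 > 0$ — yields the claimed closed form for $x_i^\star$. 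I would also note the sanity conditions that keep the bound meaningful: the denominator is strictly positive, and $\vartheta > H_0$ guarantees (as in Lemma~\ref{DB::eqb}) that a contributing marginal agent of this type exists.

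I do not expect a genuine obstacle here, since after applying Lemma~\ref{DB::eqb} the problem reduces to solving a single linear inequality in $x_i^\star$. The one point requiring real care — and the only substantive difference from Lemma~\ref{lemma1-contri} — is the placement of the belief-based reward $m_i$: for low-belief agents $m_i$ sits in the \emph{unfunded} rather than the funded payoff, so it enters the numerator of the bound weighted by $(1-b_{i,t_{i,2^\star}})$ rather than $b_{i,t_{i,2^\star}}$, and its sign must be tracked carefully as the constant $m_i$ term is moved across the inequality. That bookkeeping step is where an error is most likely to creep in, so I would verify it against the high-belief derivation to make sure the only changes are the removal of $m_i$ from the funded payoff and its reappearance, with the correct weight and sign, in the numerator.
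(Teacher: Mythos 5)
Your proposal is, in outline, the paper's proof verbatim: invoke Lemma~\ref{DB::eqb} to fix $C_0=H_0$, impose $\mathbb{E}[\pi_i^F]\geq\mathbb{E}[\pi_i^{UF}]$ with the $\mathbf{A}_L$ payoffs from Eq.~\ref{eqn2::PPRx}, and solve a single linear inequality in $x_i^\star$. Methodologically there is nothing different between the two.

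However, the one step you yourself flagged as the danger point --- moving the constant $(1-b)m_i$ across the inequality --- does not produce the claimed bound, and this is a genuine gap (one which the paper's own ``$\implies$'' shares, since it skips the intermediate algebra). Writing $b$ for $b_{i,t_{i,2^\star}}$, the stated condition gives
\begin{align*}
b(\theta_i - x_i^\star) \geq (1-b)\left(\frac{x_i^\star}{H_0}B_C + m_i\right)
&\iff b\theta_i - (1-b)m_i \geq x_i^\star\cdot\frac{B_C(1-b)+H_0 b}{H_0}\\
&\iff x_i^\star \leq \frac{H_0 b\,\theta_i - H_0 m_i(1-b)}{B_C(1-b)+H_0 b},
\end{align*}
i.e., the $m_i$ term enters the numerator with a \emph{negative} sign, whereas Eq.~\ref{eqb-c-2} has $+H_0 m_i(1-b)$. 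The minus sign is also what economic intuition demands: for $i\in\mathbf{A}_L$ the reward $m_i$ is paid precisely in the unfunded outcome, so a larger $m_i$ makes non-funding more attractive and should \emph{shrink} the maximum contribution the agent is willing to make. Your closing assertion that the bookkeeping ``yields the claimed closed form'' is therefore exactly the error you set out to avoid: had you carried out the verification you proposed, you would have found that the bound in the lemma statement does not follow from the condition $\mathbb{E}[\pi_i^F]\geq\mathbb{E}[\pi_i^{UF}]$ --- either the lemma's numerator carries a sign typo, or a different equilibrium condition is needed to justify it.
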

        \begin{proof}
        Similar to Lemma~\ref{lemma1-contri}, we again solve for $x_{i}^\star$ such that $\mathbb{E}[\pi_i^F]\geq \mathbb{E}[\pi_i^{UF}]$, for each $i\in\mathbf{A}_H$. That is,
                \begin{align*}
            b_{i,t_{i,2^\star}}\cdot(\theta_i-x_i)\geq (1-b_{i,t_{i,2^\star}})\cdot\left( \frac{x_i^\star}{H_0}\cdot B_C+m_i\right) \\
             \implies x_i^\star\leq \frac{{H}_0b_{i,{t_{i,{2^\star}}}}\theta_i+{H}_0m_i(1-b_{i,{t_{i,{2^\star}}}})}{B_C(1-b_{i,{t_{i,{2^\star}}}})+{H}_0b_{i,{t_{i,{2^\star}}}}}
        \end{align*}
        This proves the lemma.
        \end{proof}

    \subsubsection{Time of Equilibrium Contribution}

Firstly, note that the refund bonus scheme in PPR (or PPRx) is independent of time. This induces a simultaneous-move game in PPR \cite{chandra2016crowdfunding} or PPRx~\cite{damle2019ijcai}. However, in PPRx-DB, the dynamic evolution of agents' belief towards the public project results in \emph{variable} expected payoff for each agent -- dependent on their belief at each epoch. Thus, unlike PPR and PPRx, PPRx-DB does not induce a simultaneous-move game and can be deployed in sequential settings.

The challenge remains to identify the time at which an agent will contribute to the public project. Recall that we denote the funded payoff for agent $i$ as $\pi_i^F$ and the unfunded payoff as $\pi_i^{UF}$. Now, the complete payoff structure for agent $i$ is,
    $$
    \pi_i(\cdot)=\mathbbm{1}_{C_0\geq {H}_0}\cdot\pi_i^{F}(\cdot) + \mathbbm{1}_{C_0<{H}_0}\cdot \pi_i^{UF}(\cdot).
    $$

At equilibrium, the {expected} funded payoff is equal to the {expected} not funded payoff (Lemmas \ref{lemma1-contri} and \ref{lemma2-contri}). Thus, we have $\mathbb{E}[\pi_i]=\mathbb{E}[\pi_i^{UF}]$, $\forall i$ at equilibrium. In PPRx-DB, from Eq.~\ref{eqn1::PPRx} and Eq.~\ref{eqn2::PPRx}, we also have
    $$
    \pi_i^{UF}(x_i) = \frac{x_i}{C_0}\cdot B_C + c,
    $$
where $c=0,\forall i \in \mathbf{A}_H$ and $c=m_i,\forall i\in\mathbf{A}_L$. 

Now, the equilibrium time of contribution $t_{i,2^\star},~\forall i\in \mathbf{A}$, can be calculated as:
$$
t_{i,2^\star} = \argmax_{t_{i,2}\in\mathbf{\bar{T}}_C} \mathbb{E}[\pi_i^{UF}(x_i^\star)].
$$

The subsequent results indeed derive $t_{i,2^\star}$ for the set of agents in $\mathbf{A}_H$ and $\mathbf{A}_L$. For these, we remark that when an agent $i$ arrives at the Contribution Phase (CP), its belief at that epoch is the same as its prior belief, i.e., $b_{i,a_{i,2}}=b_{i,0}$. This stems from the fact that the agent has yet to observe the available information for any meaningful belief update.

%
\begin{figure*}
    \centering
    \begin{minipage}{0.44\textwidth}\centering
    \includegraphics[width=\linewidth]{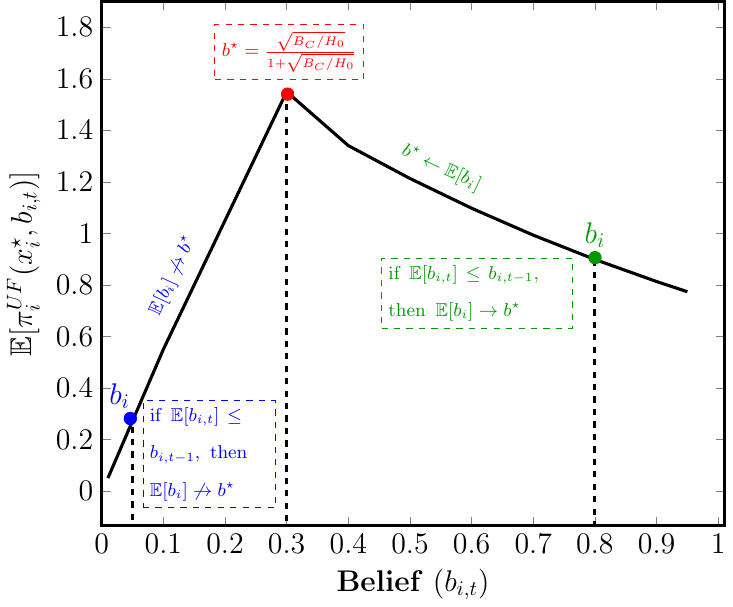}
    \caption*{(a) When $\{b_{i,t}\}_{t\in\bar{\mathbf{T}}_C}$ is a Super-martingale}
    \end{minipage}
    \begin{minipage}{0.44\textwidth}\centering
    \includegraphics[width=\linewidth]{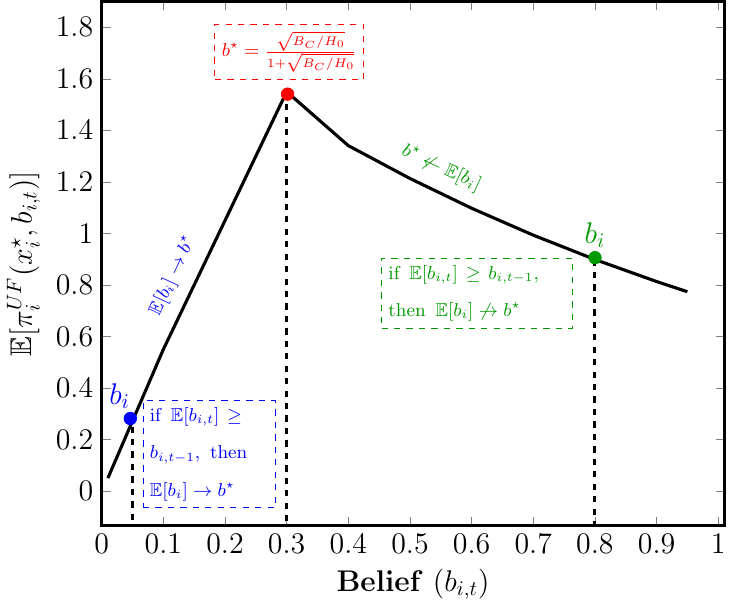}
    \caption*{(b) When $\{b_{i,t}\}_{t\in\bar{\mathbf{T}}_C}$ is a Sub-martingale}
    \end{minipage}
    \caption{Proof Intuition for Lemma~\ref{lemma_new_1}: Deriving Time of Equilibrium Contribution for Agent $i\in\mathbf{A}_H$}
    \label{fig:lemma4}
\end{figure*}

\circled{1}~\paragraph{For Agents with High Belief.} Consider the following lemma for each agent $i\in\mathbf{A}_L$.

\begin{lemma}\label{lemma_new_1}
In PPRx-DB, with $b^\star=\frac{\sqrt{B_C/H_0}}{1+\sqrt{B_C/H_0}}$, for each $i\in\mathbf{A}_H$, if:
\begin{enumerate}[leftmargin=*]
    \item $\{b_{i,t}\}_{t\in\bar{\mathbf{T}}_C}$ is a Martingale, then 
    $
    t_{i,2^\star} = T_C.
    $
\item $\{b_{i,t}\}_{t\in\bar{\mathbf{T}}_C}$ is a Super-martingale, then 
$$
    t_{i,2^\star} =
\begin{cases}
    a_{i,2}& \mbox{~if~} b_{i,0}\leq b^\star \\
    t\mbox{~s.t.~} b_{i,t}=b^\star& \mbox{~if~} b_{i,0}> b^\star
\end{cases}
$$
\item  $\{b_{i,t}\}_{t\in\bar{\mathbf{T}}_C}$ is a Sub-martingale, then 
$$
    t_{i,2^\star} =
\begin{cases}
    a_{i,2}& \mbox{~if~} b_{i,0}\geq b^\star \\
    t\mbox{~s.t.} b_{i,t}=b^\star& \mbox{~if~} b_{i,0}< b^\star
\end{cases}
$$
\end{enumerate}
\end{lemma}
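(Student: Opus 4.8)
The plan is to reduce the timing question to a one-dimensional optimization over the belief held at the instant of contribution, and then let the (super/sub-)martingale structure dictate how that belief moves over the Contribution Phase. First I would use the equilibrium identity noted just above the statement, $\mathbb{E}[\pi_i]=\mathbb{E}[\pi_i^{UF}]$, so that choosing $t_{i,2}$ amounts to maximizing the expected unfunded payoff. Substituting the equilibrium contribution $x_i^\star$ from Lemma~\ref{lemma1-contri} into $\mathbb{E}[\pi_i^{UF}]=(1-b_{i,t})\frac{x_i^\star}{H_0}B_C$ collapses everything into a single function of the contribution-time belief $b:=b_{i,t_{i,2}}$, namely $g(b)=B_C(\theta_i+m_i)\,\frac{b(1-b)}{B_C(1-b)+H_0 b}$. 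The agent's problem is thus to steer $b_{i,t}$ toward the maximizer of $g$ and contribute there.

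Second, I would study $g$ on $[0,1]$. It vanishes at both endpoints and is strictly positive inside, so it has an interior maximizer. Setting $g'(b)=0$ reduces, after clearing the common denominator, to the quadratic $(H_0-B_C)b^2+2B_C b-B_C=0$, whose admissible root simplifies (via $\sqrt{H_0 B_C}-B_C=\sqrt{B_C}(\sqrt{H_0}-\sqrt{B_C})$ and $H_0-B_C=(\sqrt{H_0}-\sqrt{B_C})(\sqrt{H_0}+\sqrt{B_C})$) to exactly $b^\star=\frac{\sqrt{B_C/H_0}}{1+\sqrt{B_C/H_0}}$, matching the statement. A short computation of $g''$ shows it is negative throughout $[0,1]$, so $g$ is concave and unimodal: strictly increasing on $[0,b^\star]$ and strictly decreasing on $[b^\star,1]$. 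This unimodality is the structural fact driving every case.

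Third, I would translate the belief dynamics into timing. Using the stated boundary condition $b_{i,a_{i,2}}=b_{i,0}$ and tracking the expected trajectory $t\mapsto\mathbb{E}[b_{i,t}]$, the defining (super/sub-)martingale inequality makes this trajectory non-increasing (super-martingale), non-decreasing (sub-martingale), or constant (martingale). Composing with $g$ and invoking its unimodality reads off each case: for a super-martingale the expected belief only drifts down from $b_{i,0}$, so if $b_{i,0}\le b^\star$ the agent is already on the decreasing side and should stop at arrival $a_{i,2}$, whereas if $b_{i,0}>b^\star$ the trajectory descends through the peak and the agent waits for the first epoch with $b_{i,t}=b^\star$; the sub-martingale case is the mirror image; and for a martingale $\mathbb{E}[b_{i,t}]\equiv b_{i,0}$, so $g$ is flat in $t$, the agent is indifferent, and PPRx-DB degenerates to the time-independent PPRx best response in which contributions are deferred to the deadline $T_C$.

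The main obstacle I anticipate is not the algebra (concavity and the closed form for $b^\star$ are routine once the quadratic is set up) but the modeling justification of the objective. The clean case analysis optimizes $g(\mathbb{E}[b_{i,t}])$ along the expected trajectory, and one must argue this is the operative criterion rather than the adaptive stopping value $\sup_\tau\mathbb{E}[g(b_{i,\tau})]$, since for concave $g$ the latter would instead favor stopping early. Relatedly, the martingale conclusion $t_{i,2^\star}=T_C$ rests on breaking the indifference in favor of deferral (precisely the race-condition behavior the paper attributes to time-independent refunds), and the clause ``$t$ s.t. $b_{i,t}=b^\star$'' needs the walk to actually attain $b^\star$, which I would secure either by a continuity or fine-step assumption on $X_{i,t}$ or by reading it as the first epoch at which the expected belief crosses $b^\star$.
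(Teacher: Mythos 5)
Your proposal is correct and takes essentially the same route as the paper's proof: both reduce the timing choice to maximizing the expected unfunded payoff as a function of the contribution-time belief, identify the unimodal peak $b^\star$ (your quadratic $(H_0-B_C)b^2+2B_Cb-B_C=0$ is algebraically identical to the paper's condition $\frac{b_{i,t}}{1-b_{i,t}}\leq\sqrt{B_C/H_0}$), and then let the drift direction implied by the martingale/super-/sub-martingale property determine whether the agent contributes at arrival, at the first epoch where $b_{i,t}=b^\star$, or at the deadline. The caveats you flag --- optimizing $g(\mathbb{E}[b_{i,t}])$ along the expected trajectory rather than the adaptive stopping value $\sup_\tau\mathbb{E}[g(b_{i,\tau})]$, the tie-break toward deferral in the martingale case, and the attainability of $b^\star$ by the walk --- apply equally to the paper's own argument, so they are honest observations about shared limitations rather than gaps relative to the paper.
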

\begin{proof} 
We broadly divide the proof into the following two steps: (i) Firstly, we show that $\mathbb{E}[\pi_i^{UF}(x_i^\star)]$ is increasing in $b_{i,t}$ only if $b_{i,t}\leq b^\star$. That is, $\mathbb{E}[\pi_i^{UF}(x_i^\star)]$ is maximized at $b^\star$. In (ii), to decide on $t_{i,2^\star}$, we condition on the underlying evolution of agent belief.  Consider the following.
\begin{itemize}[leftmargin=*]
    \item[(i)] \textbf{Deriving $b^\star$.} We have, 
    \begin{align*}
    \smaller
        \mathbb{E}[\pi_i^{UF}]=(1-b_{i,t})\cdot\frac{B_C}{H_0}\cdot \frac{{H}_0b_{i,{t_{i,{2^\star}}}}(\theta_i+m_i)}{B_C(1-b_{i,{t_{i,{2^\star}}}})+{H}_0b_{i,{t_{i,{2^\star}}}}}
    \end{align*}
Now, 
$$
\frac{\partial \mathbb{E}[\pi_i^{UF}]}{\partial b_{i,t}} >0 \iff \frac{b_{i,t}}{1-b_{i,t}}\leq \sqrt{\frac{B_c}{H_0}}.
$$
That is, $b_{i,t}\leq \frac{\sqrt{B_C/H_0}}{1+\sqrt{B_C/H_0}}=b^\star.$

\item[(ii)] \textbf{Deriving $t_{i,2^\star}$.} First, if  $\{b_{i,t}\}_{t\in\bar{\mathbf{T}}_C}$ is a Martingale, $\mathbb{E}[b_{i,t}]=b_{i,{t-1}}$. Thus, on expectation the value $ \mathbb{E}[\pi_i^{UF}]$ does not change. As such, agent $i$ has no incentive to contribute early, resulting in the race condition, i.e., $t_{i,2^\star}=T_C$. 

Second, if $\{b_{i,t}\}_{t\in\bar{\mathbf{T}}_C}$ is a Super-martingale, we have $\mathbb{E}[b_{i,t}]\leq b_{i,{t-1}}$. Since $\mathbb{E}[\pi_i^{UF}]$ increases till $b^\star$, if agents initial belief is less than $b^\star$, $b_{i,t}$ will not reach $b^\star$ (in expectation) implying agent $i$ must contribute as soon as it arrives, i.e., $t_{i,2^\star}=a_{i,2}$. However, if agent $i$'s initial belief is greater than $b^\star$ and since $\mathbb{E}[b_{i,t}]\leq b_{i,{t-1}}$, the agent waits till an epoch $t^\prime$ s.t. $b_{i,t^\prime}=b^\star$. Figure~\ref{fig:lemma4}(a) provides the proof intuition.

Last, if $\{b_{i,t}\}_{t\in\bar{\mathbf{T}}_C}$ is a Sub-martingale, we have $\mathbb{E}[b_{i,t}]\geq b_{i,{t-1}}$. Now, if agents initial belief is greater than $b^\star$, it is incentivized to contribute as soon as it arrives as its belief increases in expectation resulting in lesser $\mathbb{E}[\pi_i^{UF}]$. Likewise, if its initial belief is less than $b^\star$, than in expectation its belief will increase. That is, agent $i$ waits till an epoch $t^\prime$ s.t. $b_{i,t^\prime}=b^\star$. Figure~\ref{fig:lemma4}(b) provides the proof intuition.
\end{itemize}
This proves the lemma.
\end{proof}

\begin{table*}[t]
    \centering\small
    \begin{tabular}{ccccc}
    \toprule
\textbf{Agent Set}  &  $\{b_{i,t}\}_{t\in\bar{\mathbf{T}}_C}$  & $x_{i,t_{i,2^\star}}$  & $t_{i,2^\star}$ & \textbf{Race Condition} \\
\midrule
  \multirow{5}{*}{$\forall i \in \mathbf{A}_H$}      &  Martingale   & \multirow{5}{*}{$ \leq\frac{{H}_0b_{i,{t_{i,{2^\star}}}}(\theta_i+m_i)}{B_C(1-b_{i,{t_{i,{2^\star}}}})+{H}_0b_{i,{t_{i,{2^\star}}}}}$}     &  $T_C$ & \textcolor{red}{\cmark}  \\
  &  \multirow{2}{*}{Super-martingale} & & $a_{i,2} \mbox{~if~} b_{i,0}\leq b^\star$ & \textcolor{black!30!green}{\xmark}  \\
  &   &   & $  t\mbox{~s.t.} b_{i,t}=b^\star \mbox{~if~} b_{i,0}> b^\star$  & \textcolor{black!30!green}{\xmark} \\
  &  \multirow{2}{*}{Sub-martingale} & &  $a_{i,2} \mbox{~if~} b_{i,0}\geq b^\star$ & \textcolor{black!30!green}{\xmark}  \\
    &   &   & $  t\mbox{~s.t.} b_{i,t}=b^\star \mbox{~if~} b_{i,0}< b^\star$  & \textcolor{black!30!green}{\xmark} \\
\midrule
 \multirow{3}{*}{$\forall i \in \mathbf{A}_L$}      &  Martingale   & \multirow{3}{*}{$\leq \frac{{H}_0b_{i,{t_{i,{2^\star}}}}\theta_i+{H}_0m_i(1-b_{i,{t_{i,{2^\star}}}})}{B_C(1-b_{i,{t_{i,{2^\star}}}})+{H}_0b_{i,{t_{i,{2^\star}}}}}$} & $T_C$  & \textcolor{red}{\cmark} \\
 & Super-martingale & & $a_{i,2}$ & \textcolor{black!30!green}{\xmark}  \\
 & Sub-martingale & & $T_C$ & \textcolor{red}{\cmark}  \\
\bottomrule
    \end{tabular}
    \caption{Summary of Our Results for PPRx-DB. Here, ``\xmark" denotes that the mechanism avoids race condition.}
    \label{tab:results}
\end{table*}
%
\circled{2}~\paragraph{For Agents with Low Belief.} Similar to Lemma~\ref{lemma_new_1}, we now analytically present time of equilibrium contribution for agents in $\mathbf{A}_L$.

\begin{lemma}\label{lemma_new_2}
In PPRx-DB, with $\theta_i>m_i$ and $\theta_i<\frac{m_i\cdot H_0}{B_C}$ for each $i\in\mathbf{A}_L$, if
\begin{enumerate}
    \item $\{b_{i,t}\}_{t\in\bar{\mathbf{T}}_C}$ is a Martingale, then $t_{i,2^\star}=T_C$.
    \item $\{b_{i,t}\}_{t\in\bar{\mathbf{T}}_C}$ is a Super-martingale, then $t_{i,2^\star}=a_{i,2}$.
    \item $\{b_{i,t}\}_{t\in\bar{\mathbf{T}}_C}$ is a Sub-martingale, then $t_{i,2^\star}=T_C$.
\end{enumerate}
\end{lemma}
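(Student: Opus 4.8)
The plan is to mirror the two-step structure of the proof of Lemma~\ref{lemma_new_1}: first collapse the choice of contribution time into the behaviour of a single scalar function of the belief, and then read off the optimal stopping time from the drift of the random walk $\{b_{i,t}\}$.

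First I would substitute the equilibrium contribution $x_i^\star$ from Lemma~\ref{lemma2-contri} into the expected unfunded payoff $\mathbb{E}[\pi_i^{UF}(x_i^\star)] = (1-b_{i,t})\bigl(\tfrac{x_i^\star}{H_0}B_C + m_i\bigr)$ (using $C_0=H_0$ from Lemma~\ref{DB::eqb}), obtaining $\mathbb{E}[\pi_i^{UF}]$ as an explicit rational function $g(b)$ of the contribution-time belief $b=b_{i,t_{i,2^\star}}$. Since the equilibrium time is $t_{i,2^\star}=\argmax_{t_{i,2}\in\bar{\mathbf{T}}_C}\mathbb{E}[\pi_i^{UF}(x_i^\star)]$, the whole problem reduces to understanding how $g$ varies with $b$ together with how $b$ drifts under the walk, and to combining the two.

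Second, I would compute $g'(b)$ and use the two hypotheses $\theta_i>m_i$ and $\theta_i<\tfrac{m_iH_0}{B_C}$ to fix its sign on the admissible low-belief range. This is precisely where the low-belief case departs from Lemma~\ref{lemma_new_1}: for $i\in\mathbf{A}_H$ the payoff is single-peaked at $b^\star$, which forced the $b_{i,0}$-versus-$b^\star$ case split. The goal here is to show that, under the two stated conditions, the numerator of $g'(b)$ keeps a \emph{constant} sign throughout the range (so that $g$ is monotone rather than single-peaked), and in particular that $g$ improves in the direction of increasing belief. Then steering the belief toward an interior optimum never arises; only pushing it to the favourable endpoint matters, which is what yields the clean unconditional statement.

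Third, with monotonicity in hand, I would dispatch the three cases exactly as in part (ii) of Lemma~\ref{lemma_new_1}, comparing the drift of $\{b_{i,t}\}$ with the direction in which $g$ improves. If $\{b_{i,t}\}$ is a martingale, $\mathbb{E}[b_{i,t}]=b_{i,t-1}$, so $\mathbb{E}[g]$ is unchanged in expectation, the agent has no incentive to move early, and $t_{i,2^\star}=T_C$ (the race condition). If it is a super-martingale, the belief drifts \emph{against} the favourable direction, so waiting can only lower the expected payoff and the agent contributes on arrival, $t_{i,2^\star}=a_{i,2}$ (recall $b_{i,a_{i,2}}=b_{i,0}$). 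If it is a sub-martingale, the drift is \emph{toward} the favourable direction, so the agent defers, giving $t_{i,2^\star}=T_C$. The main obstacle is the sign analysis of the second step: verifying that $\theta_i>m_i$ and $\theta_i<\tfrac{m_iH_0}{B_C}$ are exactly what is needed to eliminate any critical point of $g$ inside $[0,\tfrac12)$, and that all feasible (nonnegative) equilibrium contributions lie on the monotone side, is the delicate part; once monotonicity is established, the three martingale cases follow essentially for free.
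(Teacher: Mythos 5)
Your plan follows the same two-step route as the paper's own proof: substitute $x_i^\star$ from Lemma~\ref{lemma2-contri} into the expected unfunded payoff, establish monotonicity of the resulting scalar function $g(b)$ under the two hypotheses, and then read off $t_{i,2^\star}$ from the drift of $\{b_{i,t}\}$; your third step matches the paper's case analysis essentially verbatim. One divergence to be aware of: you keep the $+m_i$ term in the unfunded payoff (faithful to Eq.~\ref{eqn2::PPRx}), whereas the paper's proof silently drops it and works with $\mathbb{E}[\pi_i^{UF}]=(1-b)\frac{B_C}{H_0}x_i^\star(b)$, so your derivative computation will not reproduce the paper's quadratic in Eq.~\ref{eqn::quad}, which is the object the stated hypotheses were calibrated against.

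The genuine gap is in your second step, which you explicitly defer as ``the delicate part'': the monotonicity you plan to prove is false under the stated hypotheses, with either payoff expression. For the interval $(m_i,\, m_iH_0/B_C)$ to be nonempty one needs $H_0>B_C$. Writing the paper's derivative condition as $Q(b)=(H_0-B_C)(m_i-\theta_i)b^2+2B_C(m_i-\theta_i)b+B_C\theta_i-(H_0+B_C)m_i>0$, a direct computation of the discriminant gives
\[
\Delta/4=(m_i-\theta_i)\,H_0\,\bigl(H_0m_i-B_C\theta_i\bigr),
\]
so $m_i<\theta_i<m_iH_0/B_C$ does force $\Delta<0$; but it also forces the leading coefficient $(H_0-B_C)(m_i-\theta_i)$ to be \emph{negative}, hence $Q(b)<0$ for all $b$, i.e., $\mathbb{E}[\pi_i^{UF}]$ is strictly \emph{decreasing} in $b_{i,t}$. (Concretely: $H_0=10$, $B_C=1$, $m_i=1$, $\theta_i=5$ satisfies both hypotheses, and the payoff $(1-b)(4b+1)/(1+9b)$ is decreasing on $[0,1/2)$; including the $+m_i$ term gives $(1-b)(13b+2)/(1+9b)$, also decreasing.) With the monotonicity direction reversed, your drift step delivers the opposite of the lemma in cases (2) and (3): a super-martingale belief raises the expected unfunded payoff over time (agent defers), and a sub-martingale lowers it (agent contributes on arrival). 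The paper's proof has exactly this hole -- it asserts via unshown ``algebraic manipulations'' that $\Delta<0$ together with a positive leading coefficient is equivalent to the stated hypotheses, but those two requirements are incompatible once $\theta_i>m_i$ -- so you have faithfully reproduced its strategy, including the step that does not go through. A correct completion needs different hypotheses (e.g., $\theta_i$ large enough that $Q(0)=B_C\theta_i-(H_0+B_C)m_i>0$ and $Q$ remains positive on the relevant belief range) or a flipped conclusion. A lesser, shared gap: both you and the paper treat a martingale belief as implying $\mathbb{E}[g(b_{i,t})]$ is constant, which requires $g$ to be linear (or an appeal to optional stopping for $g(b_{i,t})$ itself), not merely monotone.
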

\begin{proof}
Similar to the proof for Lemma~\ref{lemma_new_1}, we broadly divide the proof in the following two steps: (i) Firstly, we show that $\mathbb{E}[\pi_i^{UF}(x_i^\star)]$ is increasing in $b_{i,t}$ if $\theta_i>m_i$ and $\theta_i<\frac{m_i\cdot H_0}{B_C}$. In (ii), to decide on $t_{i,2^\star}$, we condition on the underlying evolution of agent belief. Consider the following.

\begin{itemize}[leftmargin=*]
    \item[(i)] \textbf{$\mathbb{E}[\pi_i^{UF}]$ as an Increasing Function.} We first derive the condition in which $\mathbb{E}[\pi_i^{UF}]$ is an increasing function. We have,
    $$
    \mathbb{E}[\pi_i^{UF}] = (1-b_{i,t})\cdot\frac{B_C}{H_0}\cdot\frac{{H}_0b_{i,{t_{i,{2^\star}}}}\theta_i+{H}_0m_i(1-b_{i,{t_{i,{2^\star}}}})}{B_C(1-b_{i,{t_{i,{2^\star}}}})+{H}_0b_{i,{t_{i,{2^\star}}}}}
    $$
    Now,
    \begin{align}
&        \frac{\partial \mathbb{E}[\pi_i^{UF}]}{\partial b_{i,t}} >0 \iff (H_0-B_C)(m_i-\theta_i)b_{i,t}^2 + \nonumber  \\
        & 2B_C(m_i-\theta_i)b_{i,t}-(H_0+B_C)m_i+B_C\theta_i >0 \label{eqn::quad}
    \end{align}
    For $\mathbb{E}[\pi_i^{UF}]$ to be increasing, the quadratic in Eq.~\ref{eqn::quad} must be increasing. That is, its $\Delta<0$ and the first term must be positive. Through algebraic manipulations, we can show that these conditions will hold $\forall i \in \mathbf{A}_L$ iff $\theta_i>m_i$ and $\theta_i<\frac{m_i\cdot H_0}{B_C}$. 
    \item[(ii)] \textbf{Deriving $t_{i,2^\star}$.} As $\mathbb{E}[\pi_i^{UF}]$ is increasing in $b_{i,t}$ under $\theta_i>m_i$ and $\theta_i<\frac{m_i\cdot H_0}{B_C}$, we now derive $t_{i,2^\star}$ conditioned on the nature of the belief evolution. First, if $\{b_{i,t}\}_{t\in\bar{\mathbf{T}}_C}$ is a Martingale, then $\mathbb{E}[b_{i,t}]=b_{i,t}$. Trivially, the value $\mathbb{E}[\pi_i^{UF}]$ will not change in expectation for such a case. Thus, in practice, agent $i$ will defer its contribution to the deadline, i.e., $t_{i,2^\star}=T_C$.
    
    Second, if $\{b_{i,t}\}_{t\in\bar{\mathbf{T}}_C}$ is a Super-martingale, then $\mathbb{E}[b_{i,t}]\leq b_{i,t}$. In this case, a decrease in $b_{i,t}$ (in expectation) will imply a decrease in $\mathbb{E}[b_{i,t}]$. As such, agent $i$ will contribute as soon as it arrives, i.e.,  $t_{i,2^\star}=a_{i,2}$.
    
    Last, if  $\{b_{i,t}\}_{t\in\bar{\mathbf{T}}_C}$ is a Sub-martingale, then $\mathbb{E}[b_{i,t}]\geq b_{i,t}$. In this case, an increase in $b_{i,t}$ (in expectation) will imply an increase in $\mathbb{E}[b_{i,t}]$. As such, agent $i$ will defer its contribution till the deadline, i.e.,  $t_{i,2^\star}=T_C$.
    
\end{itemize}
This proves the lemma.
\end{proof}

\noindent\textbf{Note 2.} Table~\ref{tab:results} summarizes the results presented in this section. We analytically provide the equilibrium contribution and time of contribution based on the underlying property of agents' dynamic belief evolution. The equilibrium time of contribution also implies if the mechanism avoids the race condition or not. That is, when the equilibrium time of contribution equals the deadline, the race condition persists.

\subsection{PPRx-DB: SPE Strategy}
What remains to be shown is that the strategy, for each $i\in\mathbf{A}$, $\psi_i^\star = (b_i^\star,t_{i,1^\star},x_i^\star,t_{i,2^\star})$ where $b_i^\star=b_{i,0},t_{i,1^\star}=a_{i,1}$, $x_i^\star$ as defined in Lemma~\ref{lemma1-contri} and Lemma~\ref{lemma2-contri} and $t_{i,2^\star}$ as defined in Lemma~\ref{lemma_new_1} and Lemma~\ref{lemma_new_2} satisfies sub-game perfect equilibrium (SPE). To this end, consider the following theorem.

\begin{theorem}\label{thm:spe}
	For PPRx-DB, with the payoff structure as given by Eq. \ref{eqn1::PPRx} and Eq. \ref{eqn2::PPRx}, $\vartheta > H_0$ and $B_B,B_C>0$, we have $C_0=H_0$ and the set of strategies $\psi_i^\star = (b_i^\star,t_{i,1^\star},x_i^\star,t_{i,2^\star})$ where $b_i^\star=b_{i,0},t_{i,1^\star}=a_{i,1}$ and $(x_i^\star,t_{i,2^\star})$ as defined in Lemma~\ref{lemma1-contri} and Lemma~\ref{lemma_new_1} $\forall i\in\mathbf{A}_H$ and in Lemma~\ref{lemma2-contri} and Lemma~\ref{lemma_new_2} $\forall i\in\mathbf{A}_L$.
\end{theorem}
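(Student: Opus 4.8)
The plan is to prove the theorem by assembling the four preceding lemmas into a single backward-induction argument that establishes sub-game perfection. The strategy $\psi_i^\star$ has four coordinates---the reported belief $b_i^\star$, its reporting time $t_{i,1^\star}$, the contribution $x_i^\star$, and its timing $t_{i,2^\star}$---so I would show that no agent can profitably deviate in any of these coordinates at any sub-game, given that every other agent plays $\psi_{-i}^\star$. I would first invoke Lemma~\ref{DB::eqb} to fix the on-path outcome $C_0 = H_0$, since this is precisely the hypothesis under which the subsequent contribution and timing lemmas were derived.

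For the Belief Phase, I would treat the two belief coordinates separately. Because the BBR is computed through the RBTS mechanism, which is incentive compatible, reporting $b_i^\star = b_{i,0}$ maximizes the expected reward $m_i$; moreover, since $b_{i,0}\geq 1/2$ (resp. $<1/2$) places the agent in $\mathbf{A}_H$ (resp. $\mathbf{A}_L$), where $m_i$ is collected exactly in the outcome the agent deems more likely, truthful reporting is a best response in both the magnitude of $m_i$ and the set-placement sense. For the reporting time, I would use the stated fact that the BBR is a decreasing function of time: reporting at arrival, $t_{i,1^\star}=a_{i,1}$, weakly dominates any later report, and is therefore optimal irrespective of the history.

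For the Contribution Phase, I would combine the amount and timing lemmas. Conditioned on the belief $b_{i,t_{i,2^\star}}$ at the contribution epoch, Lemmas~\ref{lemma1-contri} and~\ref{lemma2-contri} give the best-response contribution $x_i^\star$ for $\mathbf{A}_H$ and $\mathbf{A}_L$ (the amount that equalizes expected funded and unfunded payoffs under $C_0=H_0$), while Lemmas~\ref{lemma_new_1} and~\ref{lemma_new_2} give the payoff-maximizing time $t_{i,2^\star}$ according to whether $\{b_{i,t}\}_{t\in\bar{\mathbf{T}}_C}$ is a martingale, super-martingale, or sub-martingale. Using the remark that $b_{i,a_{i,2}}=b_{i,0}$ upon arrival at the CP, these lemmas jointly pin down $(x_i^\star,t_{i,2^\star})$ as a function of the state each agent observes.

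The main obstacle is sub-game perfection itself: the individual lemmas characterize a best response on the equilibrium path where $C_0=H_0$, and I must upgrade this to optimality at every history $h^t$, including off-path ones. I would proceed by backward induction on arrival order---the last agent's best response is fixed by Lemmas~\ref{lemma1-contri}--\ref{lemma_new_2} given the observed $C_t$, and each earlier agent best-responds given that all later agents follow $\psi_{-i}^\star$. The delicate point is that an agent's optimal timing depends on $C_t$ through its belief dynamics $X_{i,t}=s_{i,\pm}(C_t,T-t,\circ)$, so I must argue that the martingale classification (and hence the timing prescription) is a property of the belief process respected at every reachable state, and that a deviation altering $C_t$ cannot make departing from $\psi_i^\star$ profitable---because by Lemma~\ref{DB::eqb} the remaining agents still drive the total to $H_0$ whenever $\vartheta>H_0$, and each contribution remains the best response to the agent's realized belief. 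Assembling these observations yields the SPE inequality at every sub-game, which completes the proof.
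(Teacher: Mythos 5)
Your proposal follows essentially the same route as the paper's proof: invoke Lemma~\ref{DB::eqb} for $C_0=H_0$, derive the Belief Phase strategy $(b_{i,0},a_{i,1})$ from the incentive compatibility and time-decreasing nature of the BBR, plug in Lemmas~\ref{lemma1-contri}--\ref{lemma_new_2} for $(x_i^\star,t_{i,2^\star})$, and establish sub-game perfection by backward induction from the last-arriving agent in the Contribution Phase. If anything, you are more explicit than the paper about the off-path subtlety (that the martingale classification and best responses must hold at every reachable history, not just on the equilibrium path), which the paper's own proof passes over with a brief backward-induction remark.
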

\begin{proof}
Firstly, from Lemma~\ref{DB::eqb}, we know that $C_0=H_0$ if $\vartheta > H_0$.
Next, $b_i^\star=b_{i,0}$ and $t_{i,1^\star}=a_{i,1}$ follows from the properties of BBR. More concretely, since BBR is incentive compatibility and decreasing with time, each agent $i\in\mathbf{A}$ reports its prior belief as soon as it arrive to the Belief Phase. Further, Lemmas~\ref{lemma1-contri},\ref{lemma2-contri} derive $x_i^\star$ and Lemmas~\ref{lemma_new_1},\ref{lemma_new_2} derive $t_{i,2^\star},\forall i\in\mathbf{A}_H$ and  $\forall i\in\mathbf{A}_L$, respectively.

The equilibrium strategy $\psi_i^\star$ depending on the aggregate contribution and current belief is also SPE. W.l.o.g., let agent $j$ arrive to the Contribution Phase (CP) last. If $C_{a_{j,2}}=H_0$, then its best response is contributing $x_{j,2^\star}=0$. If $H_0-C_{a_{j,2}}>0$, then irrespective of $H_0$ and $C_{a_{j,2}}$ its best strategy is $x_i^\star$ (defined in Lemmas~\ref{lemma1-contri},\ref{lemma2-contri}) and $t_{i,2^\star}$ (defined in \ref{lemma_new_1},\ref{lemma_new_2}). Using backward induction, we argue that it is the best response for every agent $i$ to follow its strategy $\psi_i^\star$, irrespective of history. That is, $\psi_i^\star$ satisfy SPE, $\forall i \in\mathbf{A}$.
\end{proof}
Theorem~\ref{thm:spe} present the SPE strategy for an agent. Without additional information/assumption regarding the belief evolution or future agents' contribution, we believe that these are a good starting point for mechanism design for crowdfunding of public projects with dynamic beliefs.

\section{Conclusion \& Future Work}
To the best of our knowledge, this paper is the first attempt at addressing the persistent issue of static beliefs in the existing literature on crowdfunding of public projects. Towards this, we model the dynamic belief update for each agent as a random walk. Empirical evidence available justifies this argument. Next, we analyzed PPRx with dynamic beliefs as PPRx-DB. We first derived the agent's equilibrium contribution as a function of their dynamic beliefs. In order to derive the time of equilibrium contribution, we condition the dynamic belief as a (i) Martingale, (ii) Super-martingale, and (iii) Sub-martingale. Based on these underlying conditions, we provide the time of equilibrium contribution. Consequently, we also showed the conditions at which PPRx-DB avoids the race condition. 

\smallskip
\noindent\textbf{Discussion \& Future Work.} Significantly, our results highlight that simpler mechanisms may also avoid the race condition, allowing a practitioner to save on-chain deployment costs. Future work can build on these results by (i) exploring other conditions that provide an analytical characterization of the agent's equilibrium time and contribution and (ii) empirically validating the evolution of the agent's dynamic belief as a martingale. In parallel, one can even attempt to learn an ML model for an agent's belief update.

\begin{acks}
    The authors would like to thank Prof. Timothy Cason for providing access to their dataset introduced in \cite{cason2021early}.  
\end{acks}

\bibliographystyle{ACM-Reference-Format} 
\bibliography{ref}

\end{document}